\documentclass[11pt]{article}

\usepackage{microtype}
\bibliographystyle{plain}

\usepackage{cite}
\usepackage{algorithm}
\usepackage{float}
\usepackage[noend]{algpseudocode}
\makeatletter
\def\BState{\State\hskip-\ALG@thistlm}
\makeatother

\usepackage[margin=1in]{geometry}
\usepackage{graphicx}
\usepackage{multicol}
\usepackage{url}  
\usepackage{accents}  
\usepackage{wrapfig}

\usepackage{amssymb}                    
\usepackage{amsfonts}                   
\usepackage{amsthm}                     

\newtheorem{theorem}{\textbf{Theorem}}[section]
\newtheorem{lemma}{\textbf{Lemma}}[section]

\newtheorem{definition}[theorem]{Definition}
\newtheorem{property}[theorem]{Property}

\newcommand{\polylog}{\texttt{polylog}}

\newcommand{\Real}{\mathbb{R}}

\begin{document}

\title{Convex Hull for Probabilistic Points}

\author{
{\small
\begin{tabular}{ccc}
  {\large  F. Betul Atalay} & {\large Sorelle A. Friedler} & {\large Dianna Xu} \\
  \\
  Dept. of Computer Engineering & Dept. of Computer Science & Dept. of Computer Science \\
  TOBB University of Economics & Haverford College & Bryn Mawr College \\
  and Technology & 370 W. Lancaster Avenue & 101 N. Merion Avenue \\
 Sogutozu, Ankara, Turkey & Haverford, PA 19041, USA & Bryn Mawr, PA 19010, USA \\
  fatalay@etu.edu.tr & sorelle@cs.haverford.edu & dxu@cs.brynmawr.edu\\
\end{tabular}
}}
\date{}

\maketitle

\begin{abstract}
We analyze the correctness of an $O(n \log n)$ time divide-and-conquer algorithm for the convex hull problem when each input point is a location determined by a normal distribution.  
We show that the algorithm finds the convex hull of such probabilistic points to precision within some expected correctness determined by a user-given confidence value $\phi$.  In order to precisely explain how correct the resulting structure is, we introduce a new certificate error model for calculating and understanding approximate geometric error based on the fundamental properties of a geometric structure.  We show that this new error model implies correctness under a robust statistical error model, in which each point lies within the hull with probability at least $\phi$, for the convex hull problem. 
\end{abstract}

\section{Introduction}

The \emph{Convex Hull Problem} is the problem of determining a minimum convex bounding polygon that covers $n$ points in the Euclidean plane. 
\begin{figure}[h]
\begin{center}
 \includegraphics[width=0.20\textwidth]{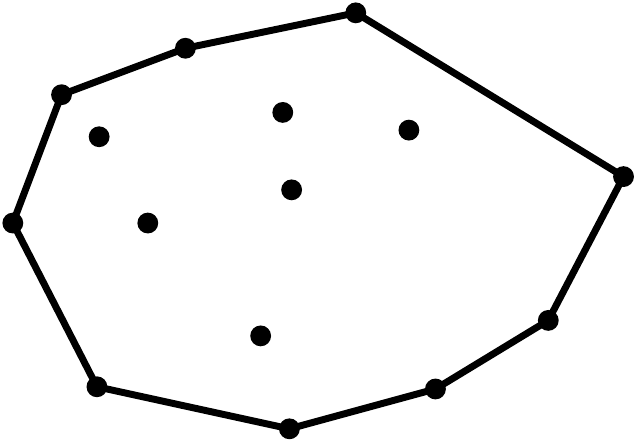}
\caption{A point set and its convex hull}
\label{fig:chull}
\vspace{-0.1in}
\end{center}
\end{figure}

This is a classic problem in computational geometry, with well known solutions including Graham's scan and divide-and-conquer (both take $O(n \log n)$ time) \cite{cgbook, ORourke1998book}.  The convex hull is a fundamental primitive for many graphics problems, such as calculation of basic shape representations (e.g., bounding boxes) \cite{imageProcBook2014} and collision detection \cite{collisionDetection2001}.  In application domains, point locations are often the result of a machine learning algorithm that outputs a probability distribution for each point's location (for a survey, see \cite{Hightower01Location}).  For example, in augmented reality, the markerless tracking problem that aims to track the position and orientation of a camera in a scene without using markers may take a hybrid approach that relies on both computer vision techniques and probabilistic GPS location information of the type generated by such machine learning algorithms \cite{augrealitySurvey2008}.  In this paper, we are interested in examining what happens when the expected values of such probabilistic points are given as input to the divide-and-conquer convex hull algorithm, with the goal of guaranteeing approximate correctness of the resulting convex hull without requiring extensive modification to existing algorithms.

We will show that the divide-and-conquer convex hull algorithm still produces an approximately correct convex hull even when its input point locations aren't known exactly.  This will require some modifications to the algorithm as well as an introduction of a new error model in order to define what we mean by an approximately correct convex hull.  We will build this new approximate notion on boolean functions that certify geometric properties necessary to a correct calculation of the convex hull.  These functions are borrowed from the study of kinetic data structures \cite{Basch99MobileData}, and so some of this work will find application to other problems studied within that boolean certification framework (as well as allowing future work to extend these results to hold on moving points).  A careful analysis will show how potential errors in these certifications propagate to the overall structure being calculated.  The convex hull will be approximate in the sense that only a given percent of the points will be expected to lie within it.  This matches the desires of some applications - for example, when determining the home range of an animal from (noisy) location observations, the goal is to compute a boundary containing some percentage of such observations \cite{animalHomeRanges2010}.

\subsection{Related Work}

Approximate correctness of a geometric structure has been considered under a number of different models, including interpretations where the structure is considered to be fully correct some percentage of time or where it is considered to be partially correct every time the algorithm is run.  We are most interested in this second interpretation, within which partial correctness has been considered under the absolute error model \cite{Fonseca10Approximate}, the relative error model \cite{Arya00Approximate}, and the robust error model \cite{Rousseeuw05Robust}.
Within the \emph{absolute error model} a structure is considered to be correct up to some given fixed error bound $\varepsilon$ that is constant for any set of points \cite{Fonseca10Approximate}.   
Under the \emph{relative error model} a structure is considered to be correct up to some percentage based on the geometric structure \cite{Arya00Approximate}.
The \emph{robust error model} is a per-point error model under which a structure is correct based on the percentage of points which are correct \cite{Rousseeuw05Robust}.  We will compare the error model we introduce to the robust error model. 

While classical computational geometry assumes exact knowledge of point location, goals of relaxing such assumptions have spurred several recent papers.
Loeffler and Kreveld \cite{Loeffler10Imprecise} have considered approximate convex hulls under an imprecise point setting, where exact point location is unknown within a region but guaranteed not to be outside of it.  They consider the convex hull under multiple variants of the relative error model and achieve running times that range from $O(n \log n)$ to $O(n^{13})$.
When considering approximate nearest neighbor searching, a model where points are described as probability distributions over their possible locations has also been considered \cite{Agarwal13ANN}.  This latter model of point location, commonly used in application domains, is the same as the one we use here (and is described in more detail in Section \ref{sec:prob_points}).  The convex hull problem has been considered within the discrete version of this point location model (where the distributions are discrete) by Agarwal \textit{et al.} \cite{Agarwal14ConvexHulls}.  
Their results give a running time of $O(m \log^3 m)$, where $m$ is the number of possible point locations in their discrete distributions.  The robust error model in Argarwal \textit{et al} implies the one we compute.  While we solve a weaker version of the problem, we improve the running time to $O(n \log n)$, where $n$ is the number of points.  Additionally, ours is the first solution to hold on continuous distributions.  

We use an $O(n \log n)$ divide-and-conquer algorithm to compute the convex
hull on a set of probabilistic points under normal distributions.
Our solution is approximately correct under a robust error model with
the correctness taken in expectation  over all possible point locations, 
so that each point has at least $\phi$ probability of being in the hull, 
for a parameter $\phi$. 
Ours is the first solution to hold for
probabilistic points with a continuous location probability
distribution.

We achieve these results not by introducing a new algorithm,
but by introducing a new error model and associated analysis of the
standard divide-and-conquer algorithm for calculating the convex hull
via its upper envelope in the dual space \cite{ORourke1998book}.  
We introduce a \emph{certificate error model} in which a
structure is considered $\phi$-correct if each Boolean certificate
used to calculate the structure is correct with probability at least
$\phi$.  We will show that approximate correctness under the
certificate error model implies approximate correctness under the
robust error model for the convex hull.

\subsection{Contributions}

The rest of this paper shows the following results:
\begin{enumerate}
\item We introduce a certificate error model guaranteeing that each certificate is correct with probability $\phi$, and a proof that this new error model implies the robust error model for the convex hull problem.  (See Sections \ref{certificate_error_model} and \ref{convex_hull_proof}.)
\item We adapt an $O(n \log n)$ algorithm to compute the convex hull for probabilistic points.  We show that this algorithm is approximately correct in expectation over all possible point locations, under a robust error model guaranteeing that each point is within the hull with probability at least $\phi$.  (See Section \ref{convex_hull}.)
\end{enumerate}

\section{Preliminaries}
\subsection{Probabilistic Points}
\label{sec:prob_points}

We define a \emph{probabilistic point} $p_j = ({N_j}, v_j)$ where ${N_j}$ is a normal probability distribution over its possible locations and $v_j \in \Real^d$ is an \emph{expected value} for the point $p_j$ given distribution ${N_j}$.  We are given a set $P$ of $n$ probabilistic points. 
$D_j = \{ x \in \Real^d | N_j^{pdf}(x) > 0 \}$ is the positive region of the probability density function $N_j^{pdf}:\Real^d \rightarrow \{ y \in \Real | y \geq 0 \}$.  We assume that the region $D_j$ is bounded.  Let $\beta_j(\phi, D_j) \subset \Real^d$ be the \emph{boundary region} of point $p_j$ defined as the minimum-area convex set such that $p_j$ is within the region with probability $\phi$, i.e.,
$\int\limits_{x \in \beta_j(\phi, D_j)} N_j^{pdf}(x) dx = \phi$.  $\phi \in [0,1]$ is a user-given confidence value, and $\Phi = 100 \cdot \phi$ is $\phi$ in percent form.  We assume that $\beta_j(\phi, D_j)$ can be calculated in $O(1)$ time.  For example, Figure \ref{fig:google_maps} shows $\beta_j(\phi, D_j)$ as the truncated Gaussian.

For the remainder of this paper we will refer to these probabilistic points $p_j = (D_j, v_j)$ simply as \emph{points}. 
Within a machine learning context, these points would be generated by a \emph{model} $M(j, E) \rightarrow p_j$ that, when given a point identity $j$ and environmental data $E$, would return the probabilistic point $p_j$.  More details about such models can be found in a survey of location models \cite{Hightower01Location}.  We will assume that this model is good enough that the point locations can generally be distinguished from each other, i.e., that for $p_i, p_j \in P$, drawn from the distribution created by the model,
\[ Pr[ \beta_i(\phi, D_i) \cap \beta_j(\phi, D_j) = \emptyset] \geq \phi  \mbox{ .}\]

\begin{figure}
\begin{center}
 \includegraphics[width=0.24\textwidth]{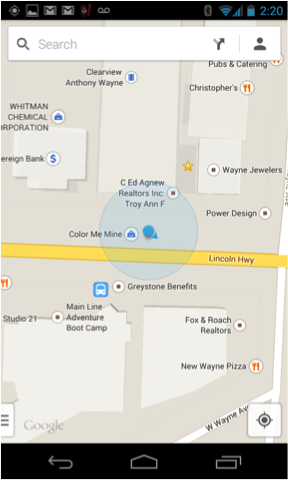}
\caption{A Google Maps screenshot showing a probabilistic point $p_j$ under a normal distribution $D_j$ where the central blue dot is $v_j$ and the lighter blue circle is its associated $\beta_j(\phi, D_j)$.}
\label{fig:google_maps}
\vspace{-0.1in}
\end{center}
\end{figure}

\subsection{Certificates}
\label{intro:certificates}
\looseness-1 Given a set of probabilistic points, we develop a framework that approximately maintains a geometric structure $G$ up to some expected correctness. We define a set of certificates $C$ that guarantee local geometric relationships crucial to the correctness of the entire structure.  For example, a single certificate might guarantee that three points are oriented in a counter-clockwise relationship.  $C$ can be considered a \emph{proof} of the correctness of $G$. These certificates are the same as those  maintained in classic kinetic data structure (KDS) settings \cite{Basch99MobileData} (we will extend them later).  The set $C$ consists of pairs containing a Boolean function $c$ which operates on a set of points $P_i \subset P$ and the set of points $P_i$ on which that function evaluates to \texttt{True}.  Such a pair $(c, P_i)$ is called a \emph{certificate}.  Within a single set $C$, there can be multiple types of such functions $c$, certifying different geometric properties.  For notational ease we will abuse notation below and refer to all such functions as $c$.  A set of certificates $C$ must satisfy the following local geometric properties  as given in \cite{Basch99MobileData}.

\begin{property}[Locality]
\label{def:locality}
For all points $p_j \in P$, $| \{ P_i ~|~ p_j \in P_i \mbox{ and } (c,P_i) \in C \} |$ is $O(\polylog(n))$ or $O(n^\epsilon)$ for arbitrarily small values of $\epsilon$.
\end{property} 

\begin{property}[Compactness]
\label{def:compactness}
$|C|$ is $O(n ~\polylog(n))$ or $O(n^{1 + \epsilon})$ for arbitrarily small $\epsilon$.
\end{property}

\begin{property}[Exclusivity]
\label{def:exclusivity}
$|P_i| \leq k$ for all $(c,P_i) \in C$, $P_i \subset P$,  and small constant $k$.
\end{property}

Locality and compactness are both required within the KDS framework and exclusivity is also generally assumed \cite{Basch99MobileData}.  Thus, we can draw on a large body of existing work defining certificates for a wide variety of problems.  (See \cite{Guibas04Kinetic}).  Notably, these certificates certify the steps of certain locally constrained algorithms and incrementally constructed problem solutions.  Divide and conquer algorithms often make good candidates for such problem certification mechanisms; Each decision in the merge process constitutes a certificate.

We add to the KDS understanding of certificates to take into account the probabilistic nature of the points.  
\begin{definition}[$\phi$-correct certificate]
\label{def:certificate}
A certificate $(c, P_i)$ for which
\[ Pr[c(P_i) = \texttt{True}] \geq \phi \]
with the probability taken over the distribution of possible point locations for points $p_j = (N_j, v_j)$ for $p_j \in P_i$. 
\end{definition}

\noindent For example, a simple certificate $(above_\phi, P_i)$ with $P_i = \{p_1, p_2 \}$ certifies that $p_1$ is above $p_2$ with probability at least $\phi$.  (See Section \ref{certificate_error_model} for a more extensive example of a problem using such certificates.)  It will be useful to note that $v_{j} \in \beta_j(\phi, D_j)$ for all $p_j \in P_i$ since $N_j$ is a normal distribution.  If all certificates are $\phi$-correct for $\phi = 1$, then the geometric structure $G$ has been correctly calculated.  The main motivation of this paper is to consider the correctness for values of $\phi < 1$.

Given knowledge of $\beta(P_i) = \{ \beta_j(\phi, D_j) | p_j \in P_i \}$, we now determine the correctness of certificate $(c, P_i)$.  $\phi^k$-correctness can be achieved by creating certificates $(c',P_i)$ with new function $c'$ such that  $c'(P_i) = \texttt{True}$ if and only if for all possible point locations 
$P_i = \{_{j=1}^k ~ p_j \in  \beta_j(\phi, D_j) ~|~ \beta_j(\phi, D_j) \in \beta(P_i) \}$
we have $c'(P_i) = \texttt{True}$.  This can be easily improved to $\phi$-correctness by determining $\beta_j(\phi^{1/k}, D_j)$ instead.  However, this is a conservative lower bound on the correctness of the certificate.  In the example certificate $above_\phi(P_i)$, we would be guaranteeing that $p_1$ is above $p_2$ and that $\beta_1(\phi, D_1)$ does not intersect $\beta_2(\phi, D_2)$.  Instead, we could calculate directly the probability that $p_1$ is above $p_2$ and set $above'_\phi(P_i) = \texttt{True}$ as long as that probability is at least $\phi$.  This guarantees that $above'_\phi(P_i)$ is $\phi$-correct.

\section{Convex Hull Algorithm}
\label{convex_hull}
Recall that the convex hull is defined as the smallest convex region
containing a set of points.  In order to determine certificates that
guarantee a solution to this problem, we turn to the KDS definition of
convex hull certificates \cite{Basch99MobileData} that we will review in this section.  The KDS solution for this problem makes use
of a divide and conquer algorithm to find the convex hull via finding the 
upper and lower envelopes in the
dual setting, where a point $(a, b)$ is represented by the line $y =
ax + b$ \footnote{Note that standard notation dualizes $(a, b)$ to $y=ax-b$, 
however, KDS \cite{Basch99MobileData} uses $ax+b$, which we follow to avoid confusion when discussing certificates.}
\cite{Basch99MobileData, ORourke1998book}.  

Given a set of $n$ lines $L= \{l_1, l_2, \ldots, l_n\}$ where $l_i$ is of the form
$y=a_ix+b_i$, if we think of these
lines as defining $n$ halfplanes, $y \ge a_ix+b_i$, each lying \emph{above}
one of the lines, then the \emph{upper envelope} of $L$ is the boundary of the intersection of these half planes (see Figure \ref{fig:envelope}). The \emph{lower envelope} is defined symmetrically.
\begin{figure}[H]
\begin{center}
\includegraphics[width=0.24\textwidth]{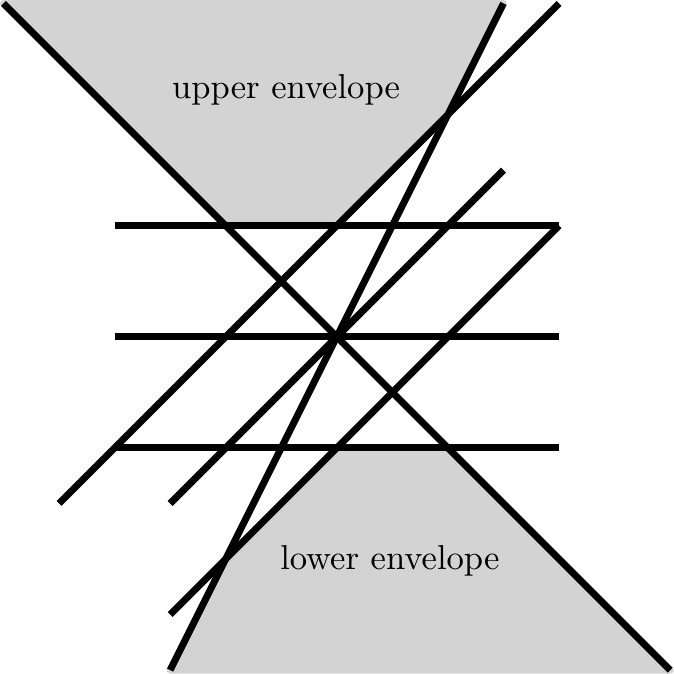}
\caption{Upper and lower envelopes}
\label{fig:envelope}
\vspace{-0.2in}
\end{center}
\end{figure}

Classic computational geometry has a well-established equivalency of the
convex hull of points and the upper/lower envelopes of a collection of lines
under the point-line duality transformation \cite{Basch99MobileData, ORourke1998book}, in that the clockwise order of the points along the upper (lower) convex hull of a set of points $P$ is equal to the left-to-right order of the sequence of the lines on the upper (lower) envelope of the dual $P^*$ (see Figure \ref{fig:chull-env}). 

\begin{figure}[H]
\begin{center}
\includegraphics[width=0.4\textwidth]{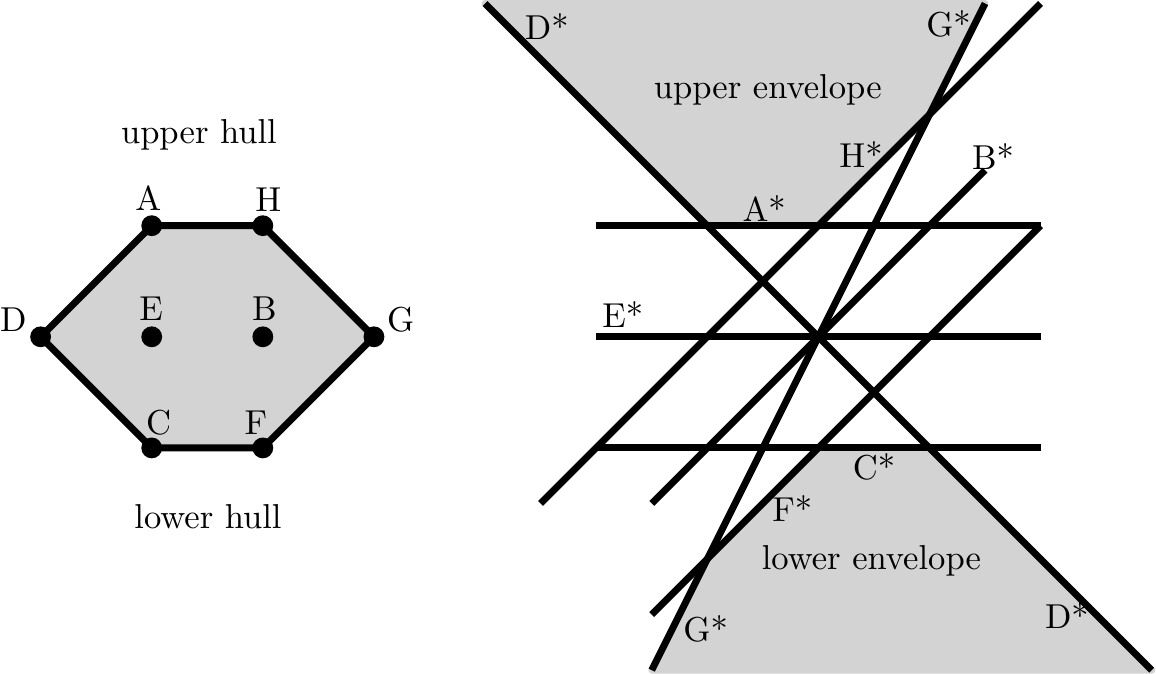}
\caption{Equivalence of convex hulls and envelopes}
\label{fig:chull-env}
\end{center}
\end{figure}

\begin{figure*}[!ht]  
\begin{center}
\includegraphics[width=\textwidth]{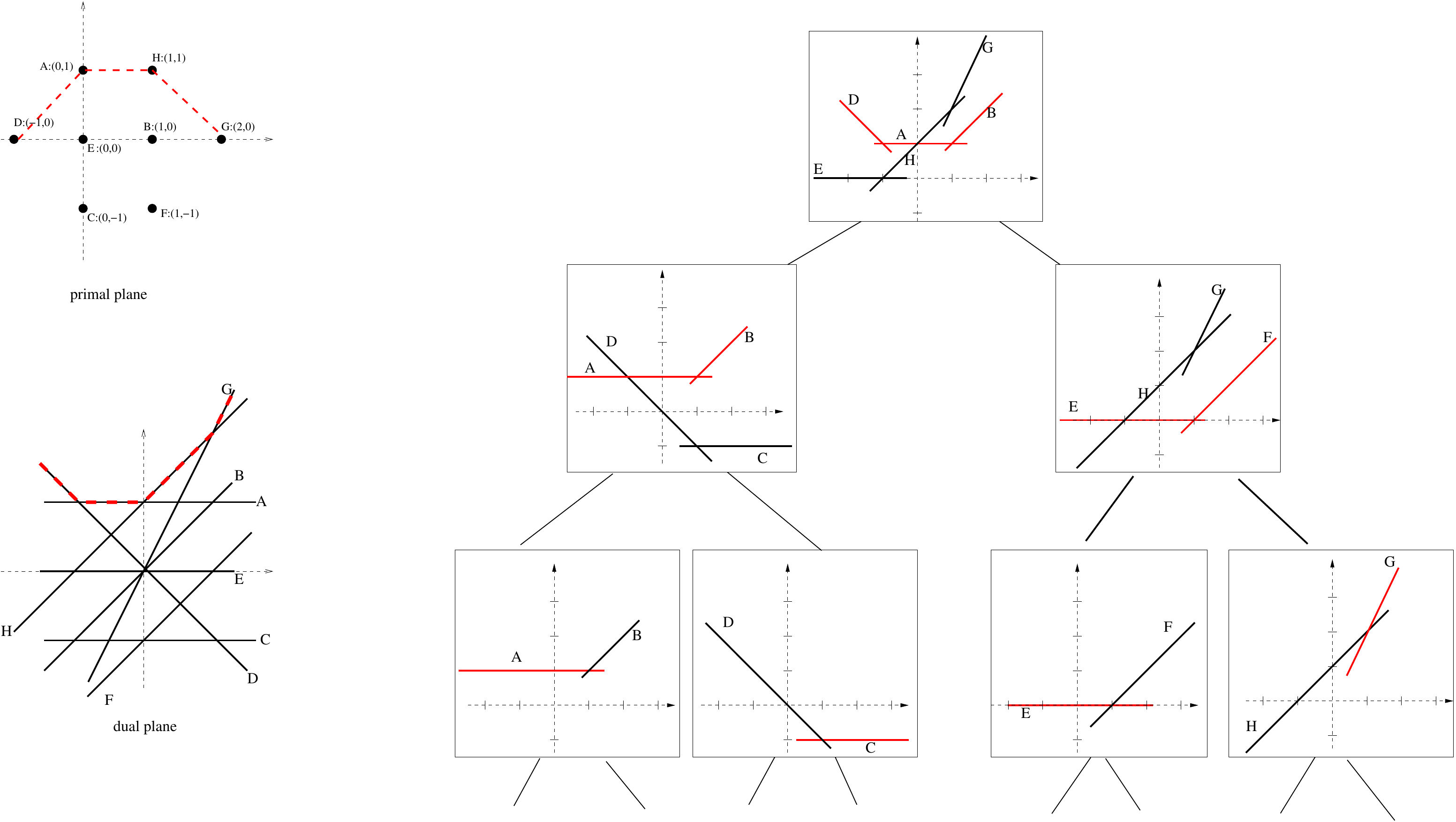}
\end{center}
\caption{Top-left: points in primal plane. Bottom-left: dual plane,
  where a point $(a, b)$ is represented by the line $y = ax + b$. Right: the merge tree corresponding to the upper envelope computation in dual space. Leaf nodes are single lines and are omitted in the figure. The certificates proving the top-most merge  are as follows: (i) a chain intersection certificate guaranteeing that $EH <_x AB$ and $EH <_y A$ and $AB <_y H$, that is, EH is to the left of AB, EH is below line A, AB is below line H (ii) a diverging certificate guaranteeing that $B {\leq}_s G$ and $AB <_y G$, that is, B's slope is less than or equal to G's slope and vertex AB is below line G, (iii) and, another diverging certificate guaranteeing that $E {\leq}_s D$ and $EH <_y D$. The certificates are explained in more detail in the paper introducing them \cite{Basch99MobileData}.}

\label{fig:mergetree}
\end{figure*}

The KDS algorithm computes the convex hull in the dual because it allows easier management of certificates.
Since our analysis will rely on certificates, we will make use of
this algorithm as well. Please refer to 
Figure \ref{fig:mergetree} as we describe it briefly below. 

We focus only on the upper envelope of dual lines: the lower envelope
computation is symmetric.  To find the upper envelope of a set of
lines, the lines are partitioned into two subsets of roughly equal
size, their upper envelopes are computed recursively, and then the two
resulting upper envelopes---let them be called red and black--- are
merged.  The merge step is performed by sweeping a vertical line
through all of the vertices of the red and black upper envelopes, from
left to right. As the line sweeps, the most recently encountered red
and black vertices are maintained along with the information whether
the red or the black chain is above. As the sweep encounters the next
red (black) vertex, the algorithm determines if it is above or below
the corresponding black (red) edge. If it is above, the current vertex
is added to the merged envelope.  If the above/below ordering of the
envelopes has changed, that means the red and black envelopes have
crossed, and the intersection point is also added to the merged upper
envelope.  This algorithm takes time $O(n \log n)$
\cite{ORourke1998book}.  We use the same algorithm and return the same
points as the hull, so our algorithm also takes time $O(n \log n)$.
However, our certificates are $\phi$-correct, so we will need to more
carefully consider the correctness of the resulting hull (Section
\ref{convex_hull_proof}).  First, we will describe these certificates
in more detail.

As originally presented in \cite{Basch99MobileData}, the comparisons
done during the merge step lead to the following certificates:
(i) \emph{x-certificates} ($<_x$) are used to certify the x-ordering of
the red and black vertices, (ii) \emph{y-certificates} ($<_y$) are
used to certify the y-ordering of a vertex with respect to an edge of
the opposite color. (iii) \emph{slope-certificates} (${\leq}_s$) involve
comparisons between line slopes. Slope-certificates are not required
in the above sweep algorithm, however they are needed in order to make
the KDS local---avoiding linearly many y-certificates per edge.

\begin{figure*}[!ht]  
\begin{minipage}{0.3\textwidth}
\begin{center}
\includegraphics[height=1.2cm]{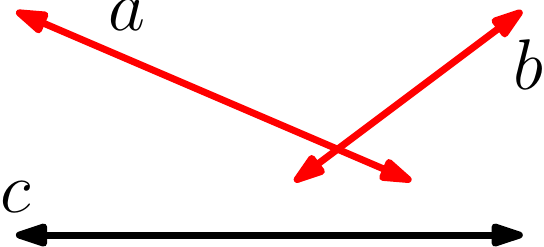}
\end{center}
\end{minipage}
\begin{minipage}{0.3\textwidth}
\begin{center}
\includegraphics[height=1.2cm]{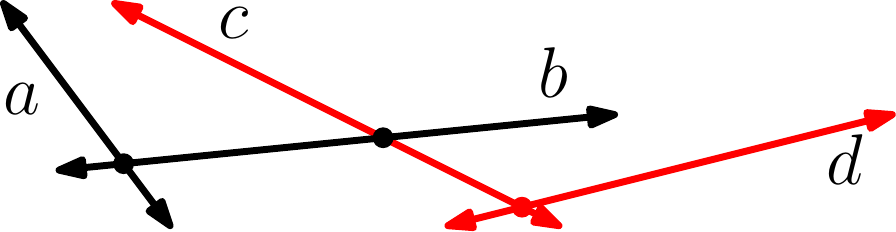}
\end{center}
\end{minipage}
\begin{minipage}{0.3\textwidth}
\begin{center}
\includegraphics[height=1.2cm]{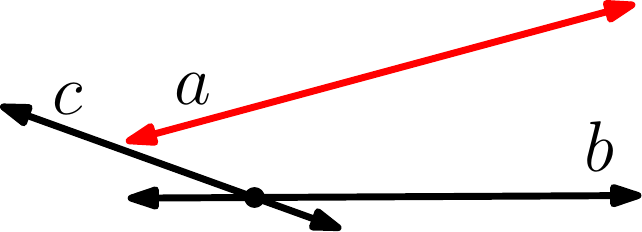}
\end{center}
\end{minipage}
\caption{Convex hull certificates \cite{Basch99MobileData} in the dual setting showing the two chains involved in red and black.  Left: Tangent certificates guarantee that line $c$ is below the vertex $ab$ and that the slope of line $c$ is greater than the slope of $a$ and less than the slope of $b$.  Center: Intersection certificates guarantee that the vertex $ab$ is to the left of vertex $cd$ and below line $c$ and that vertex $cd$ is to the right of vertex $ab$ and below line $b$.  Right: Diverging certificates guarantee that $b$'s slope is less than or equal to the slope of line $a$ and that the vertex $cb$ is below line $a$.  
}
\label{fig:certificate_types}
\end{figure*}

In a \emph{merge tree}, we keep track of all levels of the divide and conquer algorithm's merge step and certify the properties that determine each choice in the merge of two recursively determined upper envelope chains. 
Recall from \cite{Basch99MobileData} and \cite{Atallah85Dynamic} that the leaf nodes of the merge tree are single lines and the root node contains all line segments determining the resulting upper envelope.  
 The certificates, as presented in \cite{Basch99MobileData}, come in dependent groups. That is, the invalidation of any one certificate invalidates all other certificates in that group, and the existence of any one certificate implies the existence of all other certificates in the group.  
These certificates are \emph{tangent certificates},  
\emph{chain intersection certificates}, 
and \emph{diverging certificates} 
(see Figure \ref{fig:certificate_types}).  Diverging and tangent
certificates represent the two ways in which an edge will not be
involved in the resulting upper envelope while intersection
certificates represent a merge point of the two chains.  For more
precise details defining these certificates see
\cite{Basch99MobileData}.  One assumption implicit in the presentation
of these certificates and in the statement of the original divide and
conquer algorithm is that the points are in general
position, i.e., no two lines in the dual setting are
parallel.  
We make the same assumption. See Figure \ref{fig:mergetree} for an
example involving eight points (shown both in primal space and the dual
space), and the merge tree corresponding
to the  upper envelope computation.

\paragraph*{Convex Hull Algorithm} 
Using these certificates, we now have the following algorithm.  Create the certificates based on the expected values of the points.  We will show in Lemma \ref{lem:expected_to_phi} this gives a set of $\phi$-correct certificates.  Find the convex hull using these certificates and the $O(n \log n)$ divide and conquer algorithm \cite{Basch99MobileData}.  Then find the boundary of the convex hull of those points.  Specifically, given points $H$ determined by the certification process as the hull, we report $\mathcal{CH}(\{\beta_i(\phi, D_i) ~|~ p_i \in H\})$ where $\mathcal{CH}$ takes the convex hull of the convex bounding regions of the points on the hull (e.g., using the algorithm of \cite{devillers95incremental} which takes time $O(n \log n)$).  The resulting algorithm takes time $O(n \log n)$.

 \section{Certificate Error Model}
\label{certificate_error_model}
In order to reason about the correctness of this convex hull algorithm when expected values are used instead of precise points, we introduce a new error model that evaluates the correctness of a geometric structure based on the correctness of its component certificates.

\begin{definition}[$\phi$-correct within the certificate error model]
  Given a set of certificates $C$ guaranteeing a
  geometric structure, for all $(c,P_i) \in C$, $(c,P_i)$ is
  $\phi$-correct.
\end{definition}

When we originally construct the set of certificates for a problem we will do so based on the set of expected values of the points, $V(P_i) = \{ v_{j} | p_j \in P_i \}$.  The question then becomes what is the relationship between certificates involving the expected values and those involving the set of points at their full distribution of locations.  Certificates that are 1-correct based on the expected values are easily achieved by making the direct comparisons based on the known expected values, the question is how these relate to the certificates involving the points. 
Recall that $v_i \in \beta_i(\phi, D_i)$ for normal distributions.

\begin{lemma}
\label{lem:expected_to_phi} Given that 
\[ Pr[ \beta_j(\phi, D_j) \cap \beta_k(\phi, D_k) = \emptyset] \geq \phi \mbox{ for all } p_j, p_k \in P_i \]
where $P_i \subset P$, then
\[ \big( Pr[c(V(P_i)) = \texttt{True}] = 1 \big) \Longrightarrow \big( Pr[ c(P_i) = \texttt{True}] \geq \phi \big) \mbox{ .} \]
\end{lemma}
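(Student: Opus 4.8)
The plan is to reduce the implication to a short list of elementary facts about Gaussians, one for each type of certificate used by the divide-and-conquer upper-envelope computation. First I would fix a certificate $(c,P_i)$ built from the expected values, so $c(V(P_i))=\texttt{True}$, and record two things already available: $|P_i|\le k$ for a constant $k$ by exclusivity, and $v_j\in\beta_j(\phi,D_j)$ for every $p_j\in P_i$ because each $N_j$ is normal. Since every $c$ occurring in $C$ is an x-certificate, a y-certificate, a slope-certificate, or one of the compound tangent / chain-intersection / diverging certificates, and each of these is a Boolean combination of $O(1)$ primitive sign tests — coordinate comparisons, orientation determinants, and slope comparisons — on $O(1)$ of the points, it suffices to prove the implication for a single primitive test and then assemble $c$ from the $O(1)$ tests that define it.

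For the core estimate I would take the warm-up case in which $c$ asserts that $p_1$ lies above $p_2$ (the content of the toy $above_\phi$ certificate; the genuine convex-hull tests — y-order of a vertex against an edge, x-order of two vertices, and slope comparisons ${\leq}_s$ — are handled the same way after projecting onto the relevant coordinate, or after passing to the Gaussian image of the determinant or of the slope difference, at the cost of involving three or four of the points instead of two). For a normal $N_j$ the set $\beta_j(\phi,D_j)$ is a minimum-area $\phi$-confidence region about $v_j$, so the hypothesis that $\beta_1(\phi,D_1)$ and $\beta_2(\phi,D_2)$ are disjoint, together with $c(V(P_i))=\texttt{True}$ (which puts $v_1$ above $v_2$), forces $v_1$ and $v_2$ to be separated in the relevant coordinate by at least the sum of the two $\phi$-confidence half-widths. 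Writing the realized coordinate difference as an independent Gaussian with that mean and with variance the sum of the two coordinate variances, a one-line tail estimate gives $Pr[p_1 \text{ above } p_2] \ge \frac{1+\phi}{2} \ge \phi$. Running this template for each primitive test, and invoking the hypothesis (read as: the boundary regions of the points of $P_i$ are pairwise disjoint with probability at least $\phi$) once per relevant pair, then yields $Pr[c(P_i)=\texttt{True}]\ge\phi$ after assembling $c$ from its $O(1)$ constituents.

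The step I expect to be the main obstacle is precisely this last assembly for the compound certificates. Their decision boundaries are phrased in terms of derived features — line--line intersection points and orderings of line slopes — rather than in terms of the input points, so the ``separation in units of the standard deviation'' bound is no longer a single Gaussian tail estimate, and, worse, a naive union bound over the $O(1)$ primitive tests inside one certificate need not clear $\phi$ once three or more tests are involved, since disjointness alone only buys the per-test probability $\frac{1+\phi}{2}$. To finish I would either analyze the constituent tests of a compound certificate jointly rather than by a union bound, or fall back on the conservative construction of Section~\ref{certificate_error_model}: shrink each $\beta_j$ to $\beta_j(\phi^{1/k},D_j)$ and argue that, on the pairwise-disjointness event, the realized configuration lies in the same disjoint boundary regions as $V(P_i)$, so none of the sign tests defining $c$ can have crossed a degeneracy (which would require two of those disjoint regions to meet) and therefore $c(P_i)=\texttt{True}$ follows from $c(V(P_i))=\texttt{True}$. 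Reconciling the probability bookkeeping of these two routes so that the final bound is exactly $\phi$, uniformly over all certificate types, is the delicate part.
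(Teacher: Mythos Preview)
Your proposal is far more elaborate than the paper's argument, which is essentially the two-sentence ``fallback'' you sketch at the end. The paper never looks inside a certificate or computes a Gaussian tail: it simply observes that if $c(P_i)=\texttt{False}$ while $c(V(P_i))=\texttt{True}$, then either two boundary regions $\beta_j(\phi,D_j)$, $\beta_k(\phi,D_k)$ intersect, or some realized location lies outside its own $\beta_j(\phi,D_j)$; each bad event has probability at most $1-\phi$ (the first by the hypothesis of the lemma, the second by the definition of $\beta_j$), and the paper concludes $Pr[c(P_i)=\texttt{True}]\ge\phi$ directly from that. So the decomposition into x-, y-, and slope-tests, the $(1+\phi)/2$ per-comparison bound, and the assembly worry are all detours --- what you call the conservative construction is the whole proof, not a fallback, and no $\phi^{1/k}$ shrinking is invoked. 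Your route would in principle yield sharper per-test constants but runs into exactly the union-bound difficulty you flag for compound certificates; the paper sidesteps that by working at the level of the single event ``all points lie in pairwise-disjoint boundary regions'' rather than test by test. It is worth noting, though, that the paper's own bookkeeping is terse on precisely the point you worry about: read literally, ``both of these cases occur with probability at most $1-\phi$'' gives only $Pr[c(P_i)=\texttt{False}]\le 2(1-\phi)$ via a union bound, so the constant-reconciliation issue you single out as delicate is not fully spelled out in the paper either.
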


\begin{proof}
  Consider the cases when $c(P_i) = \texttt{False}$: either
  $\beta_j(\phi, D_j) \cap \beta_k(\phi, D_k) \not= \emptyset$
  or at least one of the true locations $p'_j$ associated with $p_j \in P_i$ is outside of
  $\beta_j(\phi, D_j)$.  Both of these cases occur with probability at most $1 - \phi$, so the certificate probability guarantee of $\phi$ has been verified.
\end{proof}

With this lemma, we have the following basic procedure for translating any existing set of certificates into a $\phi$-correct set:
\begin{enumerate}
\item Create a set of $1$-correct certificates based on the expected values of the points.  These certificates are a set of $\phi$-correct certificates involving points, by Lemma \ref{lem:expected_to_phi}.
\item Solve the problem using these certificates as usual.
\item \label{step:worst_case} Return a worst case result based on the boundaries of the points in the solution.
\end{enumerate}
Step \ref{step:worst_case} will be explained in more detail for each specific problem solution.

Here, we begin with an example problem that demonstrates the value and structure of the certificate error model without the complexity of the convex hull certificates that will be described in Section \ref{convex_hull}.  

\subsection{1D Maximum Problem} 
The \emph{1D Maximum Problem} determines
the maximum point among a set of $n$ one-dimensional points.  
We will assume that these points are in general position. 
We will use the same certificates as those in the KDS solution to this 
problem, which relies on a max
heap that maintains the full ordering of the points.  The $n-1$
certificates guarantee the parent-child relationships in the
heap \cite{Basch99MobileData}.

Following the general procedure outlined above, parent-child certificates are created based on their expected values.  We will say that $above_\phi(p, q)$ certifies that $Pr[p > q] \geq \phi$.  The original certificates created are $above_1(v_p, v_q)$.  By Lemma \ref{lem:expected_to_phi}, this gives us certificates $above_\phi(p, q)$.
Since $v_p \in \beta_p(\phi, D_p)$ and $v_q \in \beta_q(\phi, D_q)$, we also have that
$ above_{\phi}(p,q) \Longrightarrow \big( above_\phi(p, v_q) \land above_\phi(v_p, q) \big) \mbox{.}$
We find the point $m$ that is the maximum based on these certificates and report the value $\texttt{max} = max \{ p \in \beta_m(\phi, D_m) \}$ as the result of the 1D maximum problem.

\begin{figure}[h]
\begin{center}
 \centering
\includegraphics[width=0.3\textwidth]{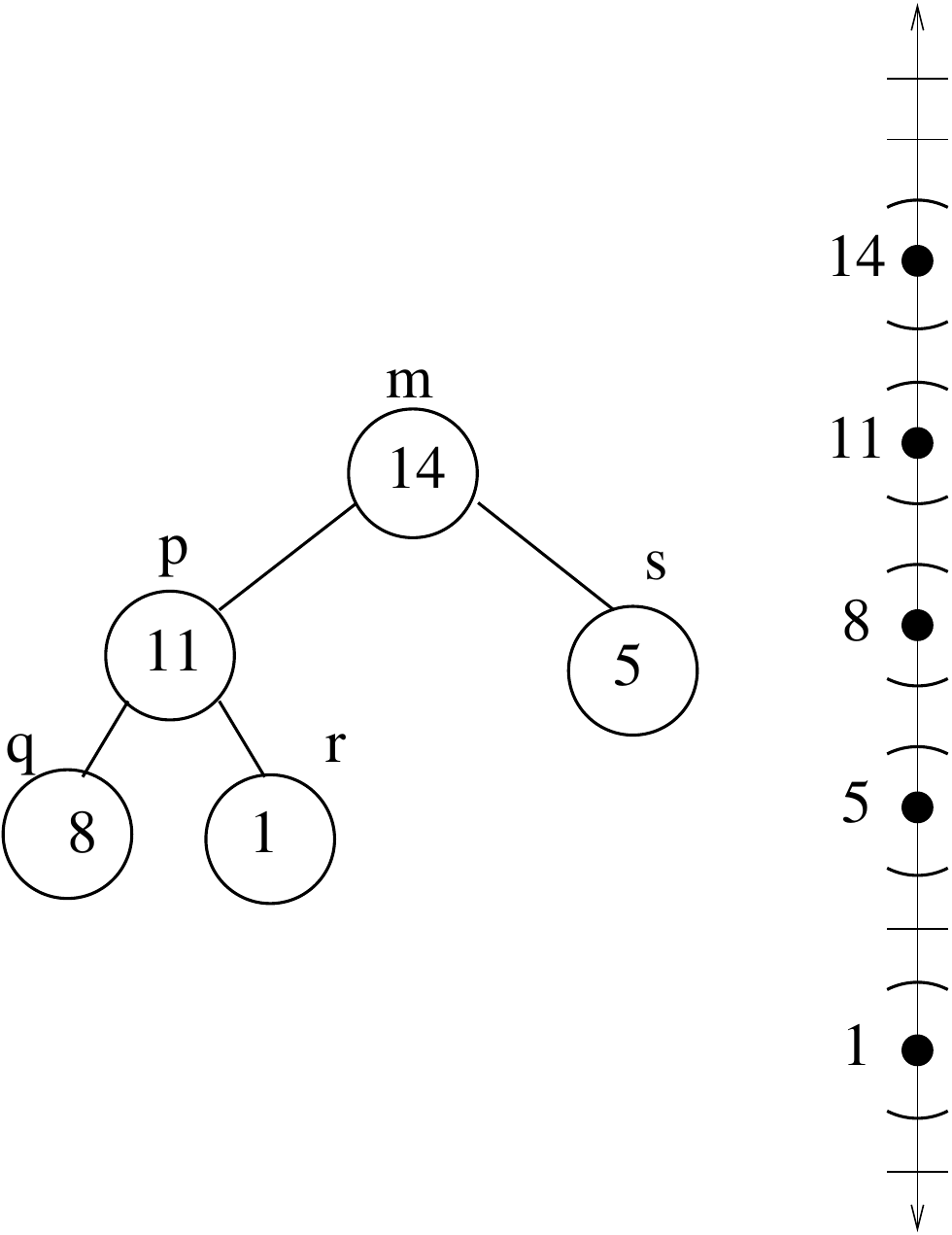}
\caption{Max-heap for the 1D Maximum problem.}
\label{fig:max1D}
\end{center}
\end{figure}

For example, consider the max heap shown in Figure \ref{fig:max1D}
maintaining probabilistic points $m, p, q, r$ and $s$.  There are four
certificates: $above_\phi(m,p)$, $above_\phi(p,q)$, $above_\phi(p,r)$ and $above_\phi(m,s)$
associated with this heap. Recall that $above_\phi(p,q)$ certifies that $p > q$ with probability at least $\phi$.  The
expected values $v_p$ at which these points are believed to be are shown within each
node.  In the case where the boundary extends 1 unit in each direction centered at $v_m$, the maximum returned is $15 = max \{ \beta_m(\phi, D_\phi) \}$.

However, the true value of a point may differ from what is
believed, which may cause a certificate to be incorrect. For example,
if point $q$ has a true value of 16, it makes the certificate
$above_\phi(p,q)$ false. Moreover, as in this case, an incorrect
certificate may mean that $q$'s real value (16) is higher than the
maximum given by the heap.  Our goal is to show that this bad case does not happen too often.  We will use the assumption that the  points' boundary regions may not intersect with high probability.

Specifically, we will show that $\phi$-correctness under the certificate error model for the 1D maximum problem implies correctness under a previously studied error model, the robust error model.  Under the \emph{robust error model}, more commonly known as a robust statistical estimator, a structure is robust to outliers up to some breakdown point \cite{Rousseeuw05Robust}.  Defining the error model more specifically is problem-dependent.
\begin{definition}[1D Maximum Problem: $\phi$-correct within the robust error model]
For a point set $P$ with $n$ points, the returned maximum point $\texttt{max}$ is such that
\[ | \{ p \in P ~|~ p \leq \texttt{max} \} | \geq \phi \cdot n \mbox{ .}\]
\end{definition}

Since we are working with probabilistic points, we are interested in an \emph{expected} $\phi$-correct robust error model.

\begin{definition}[1D Maximum Problem: expected $\phi$-correct within the robust error model]
For a point set $P$ with $n$ points, the returned maximum point $\texttt{max}$ is such that
\[ \mathbb{E} \left[ ~| \{ p \in P ~|~ p \leq \texttt{max} \} | ~\right] \geq \phi \cdot n \mbox{ .}\]
\end{definition}

\noindent If, for all points $p \in P$, $Pr[p \leq \texttt{max}] \geq \phi$ then by linearity of expectations, expected $\phi$-correctness is implied, with the expectation taken over the point location distributions.  We will thus proceed with the expected version of the definition.

\begin{theorem}[1D Maximum Problem: Certificate error model implies robust error model]
\label{thm:1dmax}
Given a max heap with certificate set $C$, points $P$, and returned maximum point $\texttt{max}$,
if for all $(c,P_i) \in C$, $(c, P_i)$ is $\phi$-correct
then
$\texttt{max}$ is expected $\phi$-correct within the robust error model.
\end{theorem}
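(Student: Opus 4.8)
The plan is to prove the per-point statement that $Pr[p \le \texttt{max}] \ge \phi$ for every $p \in P$, where $\texttt{max} = \max\{x \in \beta_m(\phi, D_m)\}$ and $m$ is the root of the max-heap; the excerpt already observes that this implies expected $\phi$-correctness in the robust error model by linearity of expectation, so this per-point bound is all that is needed. The naive approach would be to walk the root-to-$p$ path in the heap and chain the parent--child certificates, but that only gives a probability of $\phi$ raised to the depth of $p$. The observation that makes the theorem go through is that only a \emph{single} certificate---the one between $p$ and its parent---needs to carry probabilistic weight, since the comparison of expected values along the rest of the path is deterministic (the $1$-correct certificates $above_1$ were built directly from the known expected values, and a max-heap's root holds the largest of them).

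First I would handle the root: $m$'s true location lies in $\beta_m(\phi, D_m)$ with probability exactly $\phi$ by the definition of the boundary region, and on that event $m \le \max \beta_m(\phi, D_m) = \texttt{max}$, so $Pr[m \le \texttt{max}] \ge \phi$. Now fix a non-root point $p$ and let $q$ be its parent in the heap; the certificate $above_\phi(q,p) \in C$ is $\phi$-correct by hypothesis, and the implication recorded in the 1D Maximum discussion, specialized to the pair $(q,p)$, gives $above_\phi(q,p) \Rightarrow above_\phi(v_q, p)$, i.e.\ $Pr[v_q > p'] \ge \phi$, where $p'$ is the true location of $p$. Since the heap's parent--child relations are exactly the $1$-correct comparisons $above_1$ of expected values, $v_m$ is the largest expected value, so $v_m \ge v_q$; and $v_m \in \beta_m(\phi, D_m)$ forces $v_m \le \max \beta_m(\phi, D_m) = \texttt{max}$. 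These last two inequalities are deterministic, so on the event $\{v_q > p'\}$ we get $p' < v_q \le v_m \le \texttt{max}$, whence $Pr[p \le \texttt{max}] \ge Pr[v_q > p'] \ge \phi$. Summing the per-point bounds over all $n$ points and using linearity of expectation yields $\mathbb{E}\big[\,|\{p \in P : p \le \texttt{max}\}|\,\big] \ge \phi n$, which is the claim.

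The main obstacle is conceptual rather than computational: recognizing that path-chaining is the wrong move and that the problem collapses to a single certificate plus a chain of deterministic expected-value comparisons topped off by $v_m \in \beta_m(\phi, D_m)$. A secondary point that must be justified cleanly is the step $above_\phi(q,p) \Rightarrow above_\phi(v_q, p)$: it rests on the fact, implicit in the proof of Lemma \ref{lem:expected_to_phi}, that when the boundary intervals $\beta_q$ and $\beta_p$ are disjoint and ordered consistently with $v_q > v_p$ (which holds in one dimension for disjoint convex sets containing $v_q$ and $v_p$), every element of $\beta_q$ exceeds every element of $\beta_p$, so in particular $v_q > p'$ on the good event whose probability Lemma \ref{lem:expected_to_phi} lower-bounds by $\phi$.
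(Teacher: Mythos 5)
Your proof is correct and follows essentially the same route as the paper's: each non-root point is charged to the single certificate in which it participates as a child, the root is handled separately via the definition of $\texttt{max} = \max \{ p \in \beta_m(\phi, D_m)\}$, and linearity of expectation finishes the argument. The only real difference is presentational --- you spell out the deterministic chain $p' < v_q \le v_m \le \texttt{max}$ (using the heap order on expected values and $v_m \in \beta_m(\phi, D_m)$), a step the paper leaves implicit when it asserts that an incorrect certificate can push at most its child point above the reported maximum.
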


\begin{proof}
Since the heap is $\phi$-correct under the certificate error model, we expect $100-\Phi$ percent of the certificates to be incorrect.  There are $n - 1$ 
certificates in the heap.  We will examine how many points can be greater than the maximum for each incorrect certificate.  If a certificate $above_\phi(p,q)$ is incorrect then $p \not\in \beta_p$ or $q \not\in \beta_q$.  We will associate each point outside of its boundary with the certificate it participates in as a child.  Recall from Definition \ref{def:certificate} that the location of $q'$ being drastically different from $v_q$ does not invalidate the child certificates of $q$ - those were still made with respect to $q = (v_q, D_q)$.   
  Thus, each incorrect certificate (associated with some point that is outside of its boundary) will cause at most its child point to be above the maximum.  The only remaining point to consider is the maximum point, which doesn't participate in any certificate as a child.  By the definition of $\texttt{max} = max \{ p \in \beta_m(\phi, D_m) \}$, the true value
  of $m$ is greater than $\texttt{max}$ with probability at most $1 - \phi$.
      
  So $n$ points will be above the maximum reported value each with probability at most $1 - \phi$.  Thus, we expect that the reported
  maximum point will be within the top $100 - \Phi$ percent of the
  points and so the result is expected to be $\phi$-correct under the
  robust error model.
\end{proof}

\section{Convex Hull Error Model Correctness}
\label{convex_hull_proof}
We will consider correctness of the hull under the robust error model which states that the convex hull is $\phi$-correct if at least $\Phi$ percent of the points are contained within the hull.  

\begin{definition}[Convex Hull: expected $\phi$-correct within the robust error model]
Given a set of $n$ points $P$ and its set of points $H$ on the convex hull, where $\mathcal{H}$ is the associated hull region,
\[ \mathbb{E} \left[ | \{ p \in P ~|~ p \in \mathcal{H} \} | \right] \geq \phi \cdot n \mbox { .} \]
\label{def:robust_convex}
\end{definition}

\noindent This is a restatement of an idea Tukey referred to as ``peeling" in which some number or percentage of outlying points are iteratively deleted and the convex hull is computed on the remaining points \cite{Huber72Robust}.  A conservative interpretation of this definition states the goal as computing the minimum convex area containing $\Phi$ percent of the points.  Given the probabilistic nature of our points, we will instead compute the minimum convex area containing the entirety of the boundary (the truncated distribution of point locations) for the points on the hull based on their expected values.  We will show that this convex hull satisfies the weaker Definition \ref{def:robust_convex} above.

The certificate error model remains the same: a convex hull is $\phi$-correct 
if each certificate is $\phi$-correct.

\newcommand{\thmConvexHull}{
Given a set of certificates $C$ certifying a convex hull solution for points $P$, with returned set of probabilistic points $H$ that make up the hull, where $\mathcal{H}$ is the associated hull region, 
\[ \forall_{(c,P_i) \in C}~, (c,P_i) \mbox{ is } \phi-correct
\longrightarrow\]
\[ \mathbb{E} \left[ | \{ p \in P ~|~ p \in \mathcal{H} \} | \right] \geq \phi \cdot n \mbox{ .} \]
}

\begin{theorem}[Convex Hull: Certificate error model implies robust error model]
\label{thm:convex_hull}
\thmConvexHull
\end{theorem}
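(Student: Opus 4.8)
The plan is to mirror the proof of Theorem~\ref{thm:1dmax}: bound the expected number of input points whose realized location falls outside the reported region $\mathcal{H}$ by charging each such ``escaped'' point either to a $\phi$-correct certificate of the merge tree that it invalidates or to a violation of its own boundary region, and then apply linearity of expectation. First I would dispose of the easy case: every point $p_i \in H$ reported on the hull satisfies $\beta_i(\phi, D_i) \subseteq \mathcal{H}$ by construction, since $\mathcal{H} = \mathcal{CH}(\{\beta_k(\phi,D_k) \mid p_k \in H\})$; hence $\Pr[p'_i \in \mathcal{H}] \geq \Pr[p'_i \in \beta_i(\phi,D_i)] = \phi$, and such a point escapes only if its own boundary region is violated. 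Recall also that a point is a hull vertex exactly when its dual line lies on the upper envelope or on the (symmetric) lower envelope, so it suffices to reason about one merge tree.

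Next I would treat the interior points $p_j \notin H$. The structural fact I would use is that an upper envelope can only lose lines as recursively computed sub-envelopes are merged, so a line $l_j$ not on the final envelope is eliminated at exactly one node of the merge tree, by exactly one diverging- or tangent-certificate group $(c_j, P_j)$ with $|P_j| = O(1)$ by Exclusivity. Following the eliminating edge upward shows that if $c_j$ together with the certificates along this chain hold on the realized locations, and if each hull point's realized location lies in its own boundary region, then $l'_j$ lies below the realized upper envelope of the hull lines (and, symmetrically, $p'_j$ lies above the lower one), so $p'_j \in \mathcal{CH}(\{p'_i \mid p_i \in H\}) \subseteq \mathcal{CH}(\{\beta_i(\phi,D_i) \mid p_i \in H\}) = \mathcal{H}$. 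Thus $p_j$ escapes only when some certificate along this eliminating chain is invalid or some boundary region is violated; by Lemma~\ref{lem:expected_to_phi} (this is where the non-intersection hypothesis $\Pr[\beta_j(\phi,D_j)\cap\beta_k(\phi,D_k)=\emptyset]\ge\phi$ enters) each such event has probability at most $1-\phi$. Summing naively over the $O(\log n)$-length chain for each of the $n$ points plus the $n$ boundary-violation events, linearity of expectation already gives that the expected number of escaped points is $O(n\log n)(1-\phi)$; the remaining work is to sharpen this to $n(1-\phi)$ by re-indexing the bad events so that each input point is blamed at most once (one elimination event per interior point, one boundary-violation event per hull point), after which $\Exp[\,|\{p \in P \mid p \in \mathcal{H}\}|\,] \geq n - n(1-\phi) = \phi \cdot n$.

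The hard part will be exactly that sharpening. The merge tree carries $\Theta(n\log n)$ certificates, far more than the $n-1$ of the 1D max-heap in Theorem~\ref{thm:1dmax}, and Locality alone only guarantees $O(\polylog n)$ certificates touch a given point, so any union bound over ``all certificates guarding $p_j$'' is too lossy. Making the argument tight requires genuinely exploiting (i) the uniqueness of a non-hull line's elimination event, (ii) the fact that validity of the eliminating chain together with hull points lying inside their own boundary regions already forces containment in $\mathcal{CH}(\{p'_i \mid p_i \in H\})$, which sits inside $\mathcal{H}$, so that the only new per-point ingredient is the single elimination certificate, and (iii) a careful reorganization of the bad events so that the $O(1)$ and $O(\log n)$ overcounting collapses to one event of probability at most $1-\phi$ per input point. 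Care is also needed, exactly as in Theorem~\ref{thm:1dmax}, to ensure that a realized location wandering far from its expected value invalidates only certificates in which that point is the eliminated line, not certificates in which it merely serves as a bounding line.
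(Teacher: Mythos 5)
There is a genuine gap: you have correctly identified the crux of the theorem---collapsing the $\Theta(n\log n)$ certificates of the merge tree so that each input point is charged to a single bad event of probability at most $1-\phi$---but you explicitly defer that step (``the hard part will be exactly that sharpening''), and that step \emph{is} the paper's proof. The paper resolves it with two concrete lemmas: Lemma~\ref{lem:highest_level} shows a point can only be excluded by an incorrect certificate at the \emph{highest} level of the merge tree in which its dual line participates (an incorrect certificate at a lower level cannot exclude it, because the line nevertheless advanced past that level), and Lemma~\ref{lem:one_cert} shows by a case analysis on how a line participates in tangent, intersection, and diverging certificates (as a line versus as a vertex) that within a level at most one certificate can exclude it---e.g.\ if a diverging or intersection certificate in which $\ell$ participates as a vertex fails in a way that would exclude $\ell$, then the tangent certificate in which $\ell$ participates as a line is already incorrect. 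Combining the two gives exactly one ``blamable'' certificate per point, after which linearity of expectation finishes the proof. Your proposal asserts the needed uniqueness (``eliminated at exactly one node\ldots by exactly one certificate group'') but gives no argument for it, and the assertion as stated is not yet the right statement: what must be ruled out is that failures of \emph{other} certificates (at lower levels, or along the chain above the elimination node) can also push the realized point outside $\mathcal{H}$.

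Moreover, your framing makes the sharpening harder than the paper's, not easier. By insisting that containment of a realized interior point follow from correctness of the entire eliminating chain up to the root \emph{plus} every hull point lying in its own boundary region, you tie each interior point's escape to $O(\log n)$ certificate events and to $|H|$ boundary events shared across all interior points; it is not evident that any ``re-indexing'' can convert this into one event of probability at most $1-\phi$ per point, since a single hull point leaving its boundary region would, under your accounting, simultaneously endanger many interior points. The paper sidesteps this by defining exclusion relative to the reported region $\mathcal{H}=\mathcal{CH}(\{\beta_i(\phi,D_i)\mid p_i\in H\})$ and charging each point only to the one certificate whose falsity can cause it to lie outside that region, never requiring correctness of the hull points' realized locations at all. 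To complete your proof you would essentially have to abandon the chain-based bookkeeping and prove the two lemmas above.
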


In order to reason about the certificate correctness of the convex
hull, we will first need to understand the properties of these
certificates in more
detail.  
We will say that a point $p$ has been \emph{excluded} from the convex
hull by some certificate when believing that certificate's false
assertion causes $p$ to be outside of the resulting convex hull.  See
Figure \ref{fig:excluded point} for an example where point $D$ is
excluded from the convex hull due to an incorrect tangent
certificate. Note that if a certificate incorrectly causes $p$ to be
\emph{on} the convex hull, $p$ has not been excluded.

\begin{figure}[h]  
\begin{center}
\includegraphics[width=3in]{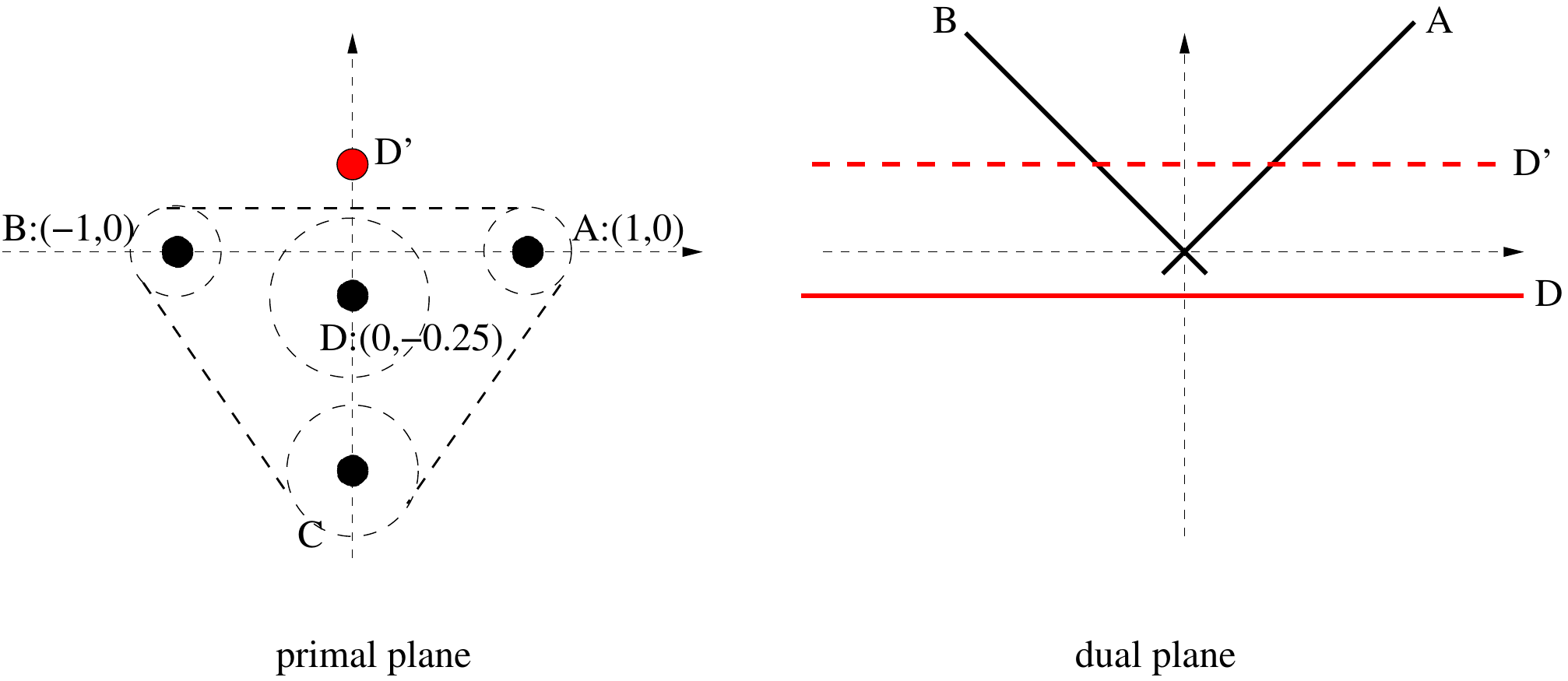}
\end{center}
\caption{Left: probabilistic points in primal plane. Right: Consider the merge of the black envelope consisting of lines A and B, and the red envelope consisting of D only. This merge is certified by a tangent certificate guaranteeing that line D is below vertex AB ($D <_y AB$), the slope of line D is greater than the slope of B ($B <_s D$), and less than the slope of A ($D <_s A$). If point D's real position is D', notice that the line corresponding to D' in dual space is above point AB, and this makes the tangent certificate incorrect. Point D is excluded from the convex hull by this incorrect tangent certificate.}

\label{fig:excluded point}
\end{figure}
Recall the tree of certificates that show the choices in the divide and conquer convex hull algorithm, which we called the merge tree (an example was shown in Figure \ref{fig:mergetree}).  
Here, we consider the levels of the merge tree in which a single point (a line in the dual setting) participates.

\begin{lemma}
\label{lem:highest_level}
Each point $p \in P$ with associated dual line $\ell$ can only be excluded from the convex hull by an incorrect certificate in the highest level $L$ of the merge tree in which $\ell$ participates.
\end{lemma}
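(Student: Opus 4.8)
The plan is to argue by the structure of the merge tree that once a line $\ell$ "survives" to contribute a vertex or edge at its highest participating level $L$, nothing at a lower level can remove the corresponding primal point $p$ from the hull. First I would fix a point $p$ with dual line $\ell$ and let $L$ be the highest level of the merge tree at which $\ell$ appears in the upper (or lower) envelope chain passed up to the parent; by definition of the merge tree, $\ell$ participates in a contiguous run of levels $1, 2, \dots, L'$ for some $L' \le L$, and at level $L$ it is either absorbed (excluded via a diverging or tangent certificate) or it emerges in the merged chain. The key observation is that exclusion of $p$ from the final hull means $\ell$ fails to appear on the global upper envelope, and the decision that drops $\ell$ from a chain is made exactly once, at the unique level where $\ell$ transitions from "present in the input chain to a merge" to "absent from the output chain." I would formalize that this transition level is precisely $L$: at every level below $L$, $\ell$ is (by assumption) still carried along in its subenvelope, so no certificate there has yet rendered a verdict that removes it; and once $\ell$ is dropped at level $L$, it never reappears, since the merge process only deletes vertices/edges from the two input chains and never reintroduces a previously discarded line.

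The main steps, in order: (1) recall from the algorithm description that a merge at a node takes the two child upper envelopes and produces a merged envelope whose edge set is a subset of the union of the two children's edge sets (plus intersection vertices), so "being on the envelope" is monotonically lost as we ascend — once gone, always gone; (2) conclude that for $p$ to be excluded, $\ell$ must be present in some child chain at level $L$ and absent from the merged chain at that same node, and this is witnessed by exactly the certificate(s) of that merge (a diverging or tangent certificate per Figure \ref{fig:certificate_types}); (3) observe that at any level strictly below $L$, $\ell$ is still on the relevant subenvelope by the definition of $L$ as the \emph{highest} such level — wait, I need the highest level at which $\ell$ appears, and argue that $\ell$ appears at all levels $\le L$ in its root-to-leaf path, because a line absent from a child envelope cannot be present in the parent's input; hence no exclusion of $p$ can be "caused" by a certificate below level $L$, since at those levels the algorithm's (possibly wrong) certificate verdicts still keep $\ell$ on the chain; (4) note that certificates at levels \emph{above} $L$ involve only edges/vertices that survived past $L$, so they cannot reference $\ell$ at all and a fortiori cannot exclude $p$. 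Combining (3) and (4), the only level whose incorrect certificate can exclude $p$ is $L$ itself.

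The subtle point — and the one I expect to be the main obstacle — is pinning down precisely what "the highest level in which $\ell$ participates" means and reconciling it with the monotonic-loss picture. A line contributes to an upper envelope at a given merge node only if its segment is non-empty on that node's point subset; as we go up the tree the subset grows, so a line can be on the envelope at a low level, get dominated, and be gone at higher levels — that direction is fine. But I must rule out the reverse subtlety: that an incorrect low-level certificate could ripple upward and cause a \emph{different} merge decision at level $L$ that then excludes $p$. Here the dependent-group structure of the certificates (invalidation of one invalidates its whole group) and the fact that the algorithm is run deterministically on the \emph{believed} certificate values — not re-validated — means the exclusion event is localized: $p$ is excluded iff, tracing the deterministic execution, $\ell$ is dropped at some merge, and I claim that drop happens at the node of level $L$. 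I would close the argument by contradiction: if $\ell$ were dropped at a node of level $L_0 < L$, then $\ell$ would be absent from that node's output chain and hence absent from every ancestor's input, so $\ell$ could not "participate" at level $L > L_0$, contradicting the choice of $L$. Therefore the dropping node is at level $L$, and only its certificates can be the incorrect ones responsible for excluding $p$.
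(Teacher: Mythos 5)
Your proposal is correct and follows essentially the same route as the paper's proof: you argue that at every level below $L$ the line $\ell$ is still carried on its subenvelope (so no certificate there has excluded it), that certificates above $L$ cannot involve $\ell$, and hence the only certificate that can exclude $p$ is the one at the level where $\ell$ is dropped, which by definition is its highest level of participation. The extra observations you add (monotone loss of envelope membership, the contradiction framing) are elaborations of the same idea the paper uses, namely that a line reported on the envelope at a level has not been excluded there.
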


\begin{proof}
  \looseness-1 Suppose there is an incorrect certificate that involves $\ell$ in some level $L'$ below level $L$.  Despite this incorrect certificate, $\ell$ advanced to level $L' + 1 \leq L$, so $\ell$ was found to be on the upper envelope for all lines in its subtree at level $L'$.  Points reported as on the convex hull (lines in the upper envelope in the dual setting) have not been excluded from the hull.  So as long as $\ell$ advances to some level above $L'$, incorrect certificates in level $L'$ can not exclude $\ell$.  So $\ell$ can only be excluded from the convex hull by an incorrect certificate that keeps it from being reported on the convex hull.  This level is, by construction of the merge tree, the highest level at which $\ell$ participates in the merge tree.
\end{proof}

Next, we examine the properties of certificates from a single line's point of view.

\begin{lemma}
\label{lem:one_cert}
Each point $p \in P$ with associated dual line $\ell$ can be excluded from the convex hull by at most one certificate when considering a single level $L$ of the merge tree.
\end{lemma}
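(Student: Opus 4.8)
The plan is to combine Lemma \ref{lem:highest_level} with a structural examination of a single merge step. First, if $L$ is not the highest level in which $\ell$ participates, then by Lemma \ref{lem:highest_level} no certificate at level $L$ can exclude $p$ at all, so the bound of one holds vacuously. It therefore suffices to treat the case where $L$ is the highest level reached by $\ell$. At that level $\ell$ lies in exactly one subtree of the merge tree, so it participates in a single merge — say $\ell$ is an edge of the red upper envelope being merged with a black one — and I would split on whether $\ell$ survives this merge.

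If part of $\ell$ remains on the merged envelope, then when $L$ is not the root the merged envelope is an input to the merge at level $L+1$, so $\ell$ would participate above level $L$, contradicting the choice of $L$; and when $L$ is the root, $\ell$ survives to the final hull, so $p \in H$ and $\beta_p(\phi,D_p) \subseteq \mathcal{H}$, whence $p$ can escape $\mathcal{H}$ only by lying outside its own boundary $\beta_p(\phi,D_p)$ — an event not attributable to any certificate. In either sub-case zero certificates at level $L$ exclude $p$. The essential remaining case is that $\ell$ drops out entirely during the merge at level $L$.

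For that case I would invoke the structure of the convex-hull certificates of \cite{Basch99MobileData}: as noted in the excerpt, a dropped edge leaves the merged envelope in exactly one of two ways — via a diverging certificate (when it is at the end of its chain and the two chains diverge) or via a tangent certificate (when the opposite chain's tangent line passes above the corresponding vertex) — and exactly one such certificate (one dependent group, which the invalidation property lets us count as a single certificate, consistent with the granularity at which \cite{Basch99MobileData} names these certificates) records this removal. The other certificates at level $L$ in which $\ell$ appears are $x$-, slope-, and chain-intersection certificates, which certify only the left-to-right order of vertices and the placement of crossing points; declaring any of them false reroutes a cut point or reorders sweep events but, because an upper envelope of lines is concave, leaves the line $\ell$ below the opposite chain over the whole of its $x$-span, hence still absent from the merged envelope. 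Thus only the one diverging-or-tangent certificate can be responsible for the exclusion of $p$ at level $L$.

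The main obstacle is making that last step airtight: ruling out that some $x$-, slope-, or chain-intersection certificate, once flipped, could conspire to resurrect $\ell$ on the merged envelope and so be counted as a second excluding certificate. I expect to handle this via concavity — over $\ell$'s $x$-interval the difference between $\ell$ and the opposite chain is a concave function, so $\ell$ appears on the merged envelope if and only if that difference is nonnegative at one of the two endpoints — together with a short case check showing that each non-tangent, non-diverging certificate at level $L$ governs only an endpoint's $x$-coordinate or a crossing vertex, never the sign of this difference. A secondary, purely bookkeeping point is to fix once and for all that a dependent certificate group counts as a single certificate, matching the statement of the lemma.
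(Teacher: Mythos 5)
Your overall plan---reduce to the single merge in which $\ell$ drops out and attribute the exclusion to the one tangent-or-diverging certificate that records the removal---is close in spirit to the paper's argument, but the step you yourself flag as ``the main obstacle'' is exactly where the content of this lemma lies, and the mechanism you propose for it does not work. Under the paper's definition the upper envelope is the boundary of the intersection of the halfplanes $y \ge a_i x + b_i$, i.e.\ the pointwise \emph{maximum} of the lines, which is a convex (not concave) piecewise-linear function; hence the difference between the line $\ell$ and the opposite chain over $\ell$'s span is concave, and a concave function can be nonnegative only strictly inside the interval. Concretely, the opposite chain can dip below $\ell$ in the middle of $\ell$'s span while lying above it at both endpoints (the two-crossing, chain-intersection configuration), so your claimed criterion that ``$\ell$ appears on the merged envelope if and only if the difference is nonnegative at one of the two endpoints'' is false, and the case check you intend to build on it cannot close the gap.

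The paper closes this gap with a different, role-based argument: $\ell$ participates \emph{as a line} in exactly one certificate group (tangent, diverging, or intersection---it has only one slope relationship with the chain above it) and \emph{as a vertex} in at most one certificate per endpoint of its envelope segment; for a failing vertex-participation (diverging or intersection) certificate to exclude $\ell$, the corresponding endpoint of $\ell$ must lie above the line of the opposite chain, but that already falsifies the tangent (resp.\ diverging) certificate in which $\ell$ participates as a line, whose slope conditions place $\ell$ below both adjacent opposite lines everywhere. Hence the exclusion is always attributable to that single as-a-line certificate, which is the counting the lemma asserts. Note also that your enumeration of the ``other'' certificates ($x$-, slope-, and chain-intersection) omits that $\ell$ may appear \emph{as a vertex} in a diverging or intersection certificate, which is precisely the case the paper must (and does) dispose of; and the appeal to Lemma \ref{lem:highest_level} is unnecessary, since the statement and the paper's proof are per-level.
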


\begin{proof} 
First, we consider which lines have the potential to be excluded from the hull for each type of certificate.  Referring to the line labels of Figure \ref{fig:certificate_types}, note that only $c$ may be excluded by an incorrect tangent certificate, $a$ or $d$ may be excluded due to an incorrect intersection certificate, and only $b$ may be excluded by an incorrect diverging certificate, since these are the lines not believed to be on the hull.  Note also that some of the lines in these certificates ($a$ and $d$ in the intersection certificates and $c$ in the diverging certificates) only contribute a vertex to the geometric relationship being guaranteed between the lines.  We will say that such lines \emph{participate as a vertex} in the certificate, while the other lines will be said to \emph{participate as a line}.

\begin{figure}[htbp]
   \centering
   \includegraphics[width=0.3\textwidth]{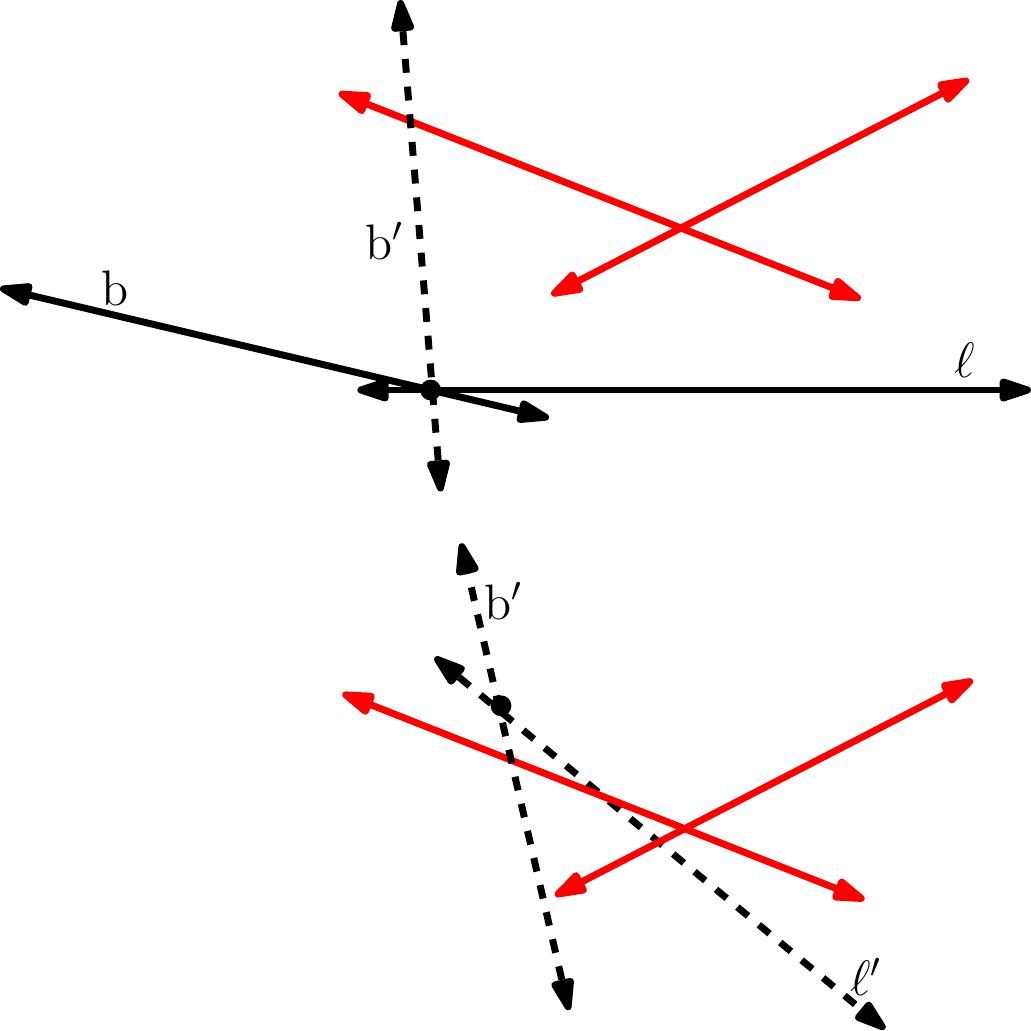} 
   \caption{The solid lines show the expected positions of $\ell$ and $b$ while the dashed lines show their possible true positions.  $\ell$ participates as a line in a tangent certificate with the red upper envelope and as a vertex in a diverging certificate.  There is no way for $\ell$ to be on the upper envelope without violating the tangent certificate it participates in.  In the top figure, the diverging certificate fails, but $\ell$ is not on the upper envelope, while in the bottom figure, both certificates fail and $\ell$ is on the resulting upper envelope.}
   \label{fig:failing_cert}
\end{figure}

  \looseness-1 Suppose that $\ell$ takes the role of $c$ in the tangent certificate (see Figure \ref{fig:failing_cert}): we will consider what other certificates $\ell$ could participate in.  Since $\ell$ already participates in one certificate as a line, it can not participate in any others as a line.  This is because $\ell$ can only have one type of slope relationship with the chain that is above it - either tangent, intersecting, or diverging.
That leaves participation as a vertex.  Depending on the number of lines participating in the merge at this level, $\ell$ may have zero, one, or two endpoints contributing to its upper envelope segment. 
At any existing endpoint, $\ell$ could participate as a vertex in a diverging certificate or an intersection certificate, but may participate in at most one certificate per vertex.  For either a failing diverging or intersection certificate to exclude $\ell$, the vertex of $\ell$ that is participating in the certificate must be above the line from the other (red) chain.  But if one of the endpoints of $\ell$ is above the other chain, then the tangent certificate is incorrect.  So for $\ell$ to be excluded, the tangent certificate it participates in must be incorrect.

A similar analysis when $\ell$ takes the role of $b$ in the diverging certificate argues that for $\ell$ to be excluded in that case, the diverging certificate it participates in as $b$ must be incorrect.  Finally, remember that if $\ell$ participates as a line in an intersection certificate it can not be excluded from the hull as it is already found to be on the hull (for this level).  We have considered all possible cases when $\ell$ participates as a line and in each of them there is a single certificate that must be wrong for $\ell$ to be excluded from the hull and no other certificate may exclude $\ell$ from the hull on its own.
\end{proof}

Now we can put these lemmas together for the proof of Theorem \ref{thm:convex_hull}.

\begin{proof}
By Lemmas \ref{lem:highest_level} and \ref{lem:one_cert} we know that each point can be excluded from the hull only by an incorrect certificate in its highest level of the merge tree and that each point has at most one certificate per level that can exclude it from the hull, so each point has at most one certificate in the whole merge tree that can exclude it.  If a convex hull is $\phi$-correct under the certificate error model, then each certificate has $1 - \phi$ probability of being incorrect and thus each point not on the hull has at most $1 - \phi$ probability of being excluded.  
By linearity of expectation, the expected number of points outside the reported hull is at most $(1 - \phi) n$.  So the convex hull is expected to be $\phi$-correct under the robust error model.
\end{proof}

\section{Conclusion}
In this paper, we introduced a new way of understanding and quantifying approximate geometric correctness via our certificate error model.  Accompanying this, we showed how this error model could be applied to probabilistic points of the type generated by machine learning models in the context of two problems - the 1D maximum and the convex hull problems.  We gave an $O(n \log n)$ time algorithm for the convex hull on probabilistic points, with approximate correctness guaranteed under the robust error model and under this certificate error model.  The strength of the certificate error model lies in its generalizability to any problem that can be stated in terms of its component Boolean properties.  These results could also be generalized to probabilistic points in a motion setting, since certificates are also the basis for kinetic data structure (KDS) results.

\bibliography{pkds}

\end{document}